\theoremstyle{plain}
\newtheorem{theorem}{Theorem}[section]
\newtheorem{lemma}[theorem]{Lemma}
\newtheorem{prop}[theorem]{Proposition}
\theoremstyle{definition}
\newcommand{\ppi}{\otimes_\pi}
\newcommand{\pe}{\otimes_\epsilon}
\begin{document}

\title{The communication complexity of XOR games via summing operators}

%\title{Strong failure of the Discrepancy Method in XOR games}
\author[C. Palazuelos]{Carlos Palazuelos}
\email{cpalazue@illinois.edu}
\address{Department of Mathematics, University of Illinois, Urbana, IL 61801, USA}

\author[D. P\'erez-Garc\'{\i}a]{David P\'erez-Garc\'{\i}a}
\email{dperez@mat.ucm.es}

\address{Departamento de An{\'a}lisis Matem{\'a}tico \\
Facultad de Matem{\'a}ticas\\
\noindent Universidad Complutense de Madrid \\
Madrid 28040\\
Spain}
\author[I. Villanueva]{Ignacio Villanueva}
\email{ignaciov@mat.ucm.es}
\address{Departamento de An{\'a}lisis Matem{\'a}tico \\
Facultad de Matem{\'a}ticas\\
\noindent Universidad Complutense de Madrid \\
Madrid 28040\\
Spain}

\begin{abstract}
The discrepancy method is widely used to find lower bounds for
communication complexity of XOR games. It is well known that these
bounds can be far from optimal. In this context Disjointness is
usually mentioned as a case where the method fails to give good
bounds, because the increment of the value of the game is linear
(rather than exponential) in the number of communicated bits. We
show in this paper the existence of XOR games where the discrepancy
method yields bounds as poor as one desires. Indeed, we show the
existence of such games with any previously prescribed value.
Specifically we prove the  following:

For any number of bits $c$ and every $0<\delta<1$ and for every
$\epsilon>0$, we show the existence of a XOR game such that its
value, both without communication or with the use of $c$ bits of
communication, is contained in the interval $(\delta-\epsilon,
\delta+\epsilon)$.

To prove this result we apply the theory of $p$-summing operators, a
central topic in Banach space theory.  We  show in the paper other
applications of this theory to the study of the communication
complexity of  XOR games.
\end{abstract}

\maketitle

\section{Introduction}
A XOR game $G=(f,\pi)$ with $N$ inputs on each side is defined by a
function $$f:[N]\times [N]:\longrightarrow \{-1,1\}$$ together with
a probability distribution $\pi:[N]\times [N]:\longrightarrow
[0,1]$. Alice and Bob receive as inputs $x,y\in [N]$ respectively
and each of them must answer a number $a,b\in \{-1,1\}$, so that $f(x,y)=a\cdot b$.
They can also be viewed as linear combinations of the correlations
achieved by the two parties when they are asked questions $x,y$.

\smallskip

XOR games are a very natural model for the study of communication
complexity in computation as in Yao's model (\cite{HuNi}). They have
also  been used for the study of complexity classes (\cite{We}),
hardness of approximation (\cite{Ha}), or for a better understanding
of parallel repetition results, both in the classical and the
quantum contexts (\cite{KeRe}, \cite{FK}, \cite{Ra}, \cite{Cl})

In the context of quantum information, XOR games appear often with
the name of correlation Bell inequalities. The so-called CHSH
inequality has an extrordinary relevance in this context
\cite{WeWo}. They also provide an excellent testbed  to study the
relation between classical computation, quantum computation and
communication complexity (\cite{ReTo}, \cite{Bu}).

\smallskip
The discrepancy method is one of the techniques most widely used to
find lower bounds for the communication complexity of a XOR game.
This method is known to give poor bounds in certain cases. We show
in Theorem \ref{Bell} that these bounds can be as poor as one wants.

We introduce the theory of $p$-summing operators with few vectors to
study the communication complexity of XOR games. This leads us to
the use of classical tools in the local theory of Banach spaces,
like Grothendieck inequality, Chevet inequality, $p$-stable measures
and the concentration of measure phenomenon. We must mention here
the papers \cite{LiSh}, \cite{LiNe} where techniques related to ours
have also been used.

\smallskip

We say that a joint strategy between Alice and Bob $\gamma$ is {\em
$c$-simulable} if Alice and Bob can simulate $\gamma$ using $c$-bits
of communication. We denote these strategies by  $\mathcal S_c$. We
say that $\gamma$ is {\em $c$-simulable from Alice to Bob} if they
can simulate it when Alice sends $c$-bits of one way communication
to Bob. We denote these strategies by $\mathcal S^{1}_c$

For a game $G$ we define its values $\omega(G), \omega_c(G),
\omega_c^{1}(G)$ as the maximum value that it attains on the
strategies in $\mathcal L, \mathcal S_c, \mathcal S_c^{1}$
respectively.

\smallskip

The discrepancy method, as stated in \cite[Proposition 3.28]{HuNi}
tells us that, for every game $G$, $\omega_c(G)\leq 2^c \omega(G)$ .
Disjointness function is usually shown as an example where the
discrepancy method fails to give good lower bounds for the
communication complexity, since the increment
$\frac{\omega_c(G)}{\omega(G)}$ is only linear in $c$.

In our main result, we show the existence of XOR games for which the
performance of the discrepancy method is as poor as one  desires.
Specifically, for any prescribed $0<\delta<1$ and for any number of
bits $c$ we prove the existence of a XOR game $G$ such that both
$\omega(G)$ and $\omega_c(G)$ are as close to $\delta$ (and hence
also to each other) as we want.

\begin{theorem}\label{Bell}
For every real number $0<\delta<1$, for every $c\in \mathbb N$ and
for every  $\epsilon>0$ there exists a natural number $N$ and a XOR
game $G$ with $N$ inputs per player such that:
$$\omega(G) \sim_\epsilon  \delta \sim_\epsilon \omega_c(G),$$
where we use the notation $a\sim_\epsilon b$ to denote
$b-\epsilon\leq a\leq b+\epsilon$ for every $a,b,\epsilon > 0$.
\end{theorem}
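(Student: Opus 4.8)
The plan is to reduce the two game values to norms of the game matrix and then exhibit a biased random sign matrix whose ``signal'' is perfectly product-aligned (so communication cannot exploit it) while its ``noise'' lives at a strictly lower scale in $N$; the factor gained from $c$ bits then only multiplies the negligible part.

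First I would fix the uniform distribution $\pi(x,y)=1/N^2$ and write the game matrix as $\frac{1}{N^2}F$ with $F=(f(x,y))\in\{-1,1\}^{N\times N}$. Setting $\beta(T)=\sup_{a,b\in\{-1,1\}^N}\sum_{x,y}T_{xy}a_xb_y$ (the norm of $T$ as a bilinear form on $\ell_\infty^N\times\ell_\infty^N$), one has $\omega(G)=\frac{1}{N^2}\beta(F)$. The discrepancy-method decomposition of \cite[Proposition 3.28]{HuNi} shows that any $c$-bit protocol partitions $[N]\times[N]$ into at most $2^c$ combinatorial rectangles $R_i=A_i\times B_i$, on each of which the players may align their $\pm1$ answers optimally; hence $\omega(G)\le\omega_c(G)\le\frac{1}{N^2}\max\sum_i\beta(F_{R_i})$, the maximum ranging over all partitions of $[N]\times[N]$ into at most $2^c$ rectangles. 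This reduces the theorem to controlling $\beta$ of $F$ and of its restrictions.

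Next comes the construction: let the entries $f(x,y)$ be independent $\pm1$ variables with $\mathbb{E} f(x,y)=\delta$, and split $F=\delta J+W$, where $J$ is the all-ones matrix and $W$ has independent mean-zero entries bounded by $2$. The signal $\delta J$ is perfectly product-aligned, since $\beta(\delta J_{A\times B})=\delta|A||B|$, so for \emph{every} rectangle partition $\sum_i\beta(\delta J_{R_i})=\delta\sum_i|R_i|=\delta N^2$, independently of $c$. Thus the main term of both $\omega(G)$ and of the upper bound for $\omega_c(G)$ is exactly $\delta$, and communication buys nothing on this part. I would then estimate the noise: by Chevet's inequality together with the concentration of measure phenomenon (with Grothendieck's inequality used to pass between the $\ell_\infty^n\to\ell_1^n$ norm and the relevant Gaussian averages, and $p$-stable bounds entering the tails), one shows that with high probability $\beta(W_{A\times B})\le C\sqrt{|A||B|(|A|+|B|)}$ simultaneously for all rectangles $A\times B$; this is the ``summing operators with few vectors'' input. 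Summing over any partition into at most $2^c$ rectangles and using $|A_i|+|B_i|\le 2N$ gives $\sum_i\beta(W_{R_i})\le C'2^{c/2}N^{3/2}=O_c(N^{3/2})$. Consequently $\delta N^2-O_c(N^{3/2})\le\beta(F)$ and $\max\sum_i\beta(F_{R_i})\le\delta N^2+O_c(N^{3/2})$, whence $\delta-O_c(N^{-1/2})\le\omega(G)\le\omega_c(G)\le\delta+O_c(N^{-1/2})$. Choosing $N$ large (depending on $c,\delta,\epsilon$) and fixing a realization in the high-probability event yields a XOR game with $\omega(G)\sim_\epsilon\delta\sim_\epsilon\omega_c(G)$.

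The main obstacle is the noise estimate: the bound on $\beta(W_{A\times B})$ must hold uniformly over the exponentially many (in $N$) rectangles at once, so a union bound over the $\le 4^N$ rectangles has to be matched against the deviations of each $\beta(W_{A\times B})$, which is precisely where Chevet's inequality and concentration of measure are essential. The decisive structural point, which makes the whole scheme succeed, is that the $2^c$ factor coming from communication attaches only to the $N^{3/2}$ noise scale and never to the $N^2$ signal scale; establishing this separation rigorously---rather than settling for the weaker bound $\omega_c(G)\le 2^c\omega(G)$ of the discrepancy method---is the heart of the argument.
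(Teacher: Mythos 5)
Your proposal is essentially correct, but it follows a genuinely different route from the paper. The paper's proof is deterministic and functional-analytic: it constructs $\tilde G$ as a normalization of $A\circ S\circ P$, where $S=k^{-1/p}\mathrm{id}:\ell_\infty^k\to\ell_p^k$ satisfies $\|S\|=\pi_p(S)=1$ and $\pi_1(S)=k^{1/p'}=\alpha$, and $A$ is the Johnson--Schechtman $(1+\epsilon)$-embedding of $\ell_p^k$ into $\ell_1^N$; the separation between the values with and without communication comes from H\"older's inequality $\pi_1^t(T)\le t^{1/p'}\pi_p(T)$ versus $\pi_1(T)\sim\alpha$ (Theorem \ref{main}), and two-way communication is handled by a separate reduction to one-way protocols (Proposition \ref{cambiodelado}). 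You instead take a random biased sign matrix, split it as signal $\delta J$ plus centered noise $W$, observe that the signal contributes exactly $\delta N^2$ to every rectangle partition so that the $2^c$ factor only ever multiplies the noise, and bound the noise at scale $N^{3/2}$. Your route is more elementary (no $p$-stable embeddings, no summing norms), treats two-way communication directly without passing through the one-way case, and gives an explicit $N=O(4^c/\epsilon^2)$; the paper's route yields in exchange the operator-theoretic Theorem \ref{main}, which is of independent interest and feeds into Theorem \ref{distancias}.

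One intermediate claim needs repair: the uniform estimate $\beta(W_{A\times B})\le C\sqrt{|A||B|(|A|+|B|)}$ over \emph{all} rectangles simultaneously is not correct as stated --- for rectangles with $|A|=|B|=k\ll N$, the union bound over the $\binom{N}{k}^2$ choices forces an additional $\sqrt{\log(N/k)}$ factor on the deviation term, and the maximum over such rectangles genuinely exceeds $Ck^{3/2}$. This does not break your argument, because you never need that refined bound: either use the cruder uniform bound $\beta(W_{A\times B})\le C\sqrt{N|A||B|}$, whose deviation term dominates the $\log(4^N)$ union-bound cost for every rectangle and still gives $\sum_i\beta(W_{R_i})\le C\,2^{c/2}N^{3/2}$ by Cauchy--Schwarz, or, simpler still, note that $\beta(W_{A\times B})\le\beta(W)$ for every rectangle (extend the sign vectors by zero and use convexity of the bilinear form over $[-1,1]^N\times[-1,1]^N$), so that $\sum_{i\le 2^c}\beta(W_{R_i})\le 2^c\beta(W)=O(2^cN^{3/2})$ with high probability, which is all the conclusion requires.
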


\

Next, we study how sharp the bound given by the discrepancy method
is, taking also into account the number of inputs $N$. We do a full
study for the case of one-way communication. Our techniques can also
be applied to more general cases. We use the notation $\simeq$ to
denote equality up to universal constants (independent of $N$ and
$c$).

\begin{theorem}\label{distancias}
For every XOR game $G$ with $N$ inputs per player, we have:

\begin{enumerate}
\item[a)] $\omega_c^{1}(G)\leq K_G 2^\frac{c}{2}\omega(G)$ and \item[b)] $\omega_c^{1}(G)\geq \frac{2^\frac{c}{2}}{K_G \sqrt{N}} $.

These inequalities are tight in the sense that there exist games
$J,H$ with $N$ inputs per player such that such that
\begin{equation}\label{tight1}
\omega_c^{1}(J)\simeq 2^\frac{c}{2}\omega(J)
\mbox{ and }
\end{equation}
\begin{equation}\label{tight2}
\omega_c^{1}(H)\simeq \frac{2^\frac{c}{2}}{\sqrt{N}}.
\end{equation}
\end{enumerate}

\end{theorem}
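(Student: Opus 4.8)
The plan is to prove Theorem \ref{distancias} by reformulating the three game values $\omega(G)$, $\omega_c^1(G)$, and $\omega(G)$ again as norms of an associated bilinear form or tensor, and then transferring the problem into the language of operator/tensor norms on finite-dimensional $\ell_1$ and $\ell_\infty$ spaces. Let me sketch this.

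Let me identify a XOR game $G=(f,\pi)$ with $N$ inputs with the matrix $M=(\pi(x,y)f(x,y))_{x,y}$, viewed as a bilinear form on $\ell_\infty^N \times \ell_\infty^N$. The classical value $\omega(G)$ is the norm of this form where Alice and Bob choose signs, i.e. the injective/$\epsilon$-tensor norm (up to normalization, this is where Grothendieck's inequality $K_G$ will enter when comparing to the completely bounded or Hilbertian version). The key structural fact I would establish first is a clean characterization of the one-way value $\omega_c^1(G)$: with $c$ bits of one-way communication, Alice partitions her inputs into $2^c$ classes (one per message) and within each class plays a signed strategy, while Bob reads the message and responds. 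I expect this to give $\omega_c^1(G)$ as a maximum over decompositions of Alice's strategy into $2^c$ pieces, which should translate into the claim that $\omega_c^1(G)$ equals (up to the normalization of $\pi$) a quantity like $\sup \{ |\langle M, A\otimes b\rangle| \}$ where $A$ ranges over $N\times 2^c$ sign matrices and $b$ over sign vectors. This reformulation is the technical heart and where I expect the main obstacle to lie.

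For part a), the upper bound $\omega_c^1(G)\leq K_G 2^{c/2}\omega(G)$, the idea is that allowing $2^c$ messages lets Alice's effective strategy live in a space of dimension boosted by $2^c$, but each coordinate is still a sign. The factor $2^{c/2}$ should arise from a Cauchy--Schwarz / Hilbert-space comparison: replacing the $2^c$ separate sign responses by a single Hilbert-space-valued strategy of norm $2^{c/2}$ costs exactly this factor, and then Grothendieck's inequality $K_G$ converts the resulting Hilbertian (vector-valued) value back to the scalar classical value $\omega(G)$. Concretely I would bound $\omega_c^1(G)$ by the norm of $M$ as an operator $\ell_\infty^N\to \ell_2^{2^c\cdot N}$-type object, use $\|x\|_{\ell_2^{2^c}}\leq 2^{c/2}\|x\|_{\ell_\infty^{2^c}}$ on Alice's side, and close the loop with the Grothendieck factorization so that the Hilbertian norm is controlled by $K_G\,\omega(G)$.

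For part b), the lower bound $\omega_c^1(G)\geq 2^{c/2}/(K_G\sqrt N)$, I would exhibit an explicit strategy achieving this: since $\omega(G)\geq$ a universal multiple of $1/\sqrt N$ for any game with an appropriate normalization, combining the trivial estimate with the gain from communication should yield the stated bound directly, again using Grothendieck to pass between the Hilbertian and scalar pictures. Finally, for the tightness claims \eqref{tight1} and \eqref{tight2}, the strategy is to produce two explicit families of games. For \eqref{tight1} I would take $H=J$ to be a game built from a (random or structured) matrix $M$ that is essentially an isometry on the relevant spaces, so that both sides of inequality a) are saturated simultaneously; such extremal matrices are exactly what the $p$-summing / Chevet-inequality machinery announced in the introduction is designed to produce, and I would invoke Chevet's inequality to compute the expected norms of a Gaussian random matrix and check that the $2^{c/2}$ and $\sqrt N$ behaviours are achieved up to universal constants. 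I expect the reformulation of $\omega_c^1(G)$ in the first step to be the genuinely delicate point, since everything afterward is a matter of applying standard local-Banach-space estimates to that reformulation.
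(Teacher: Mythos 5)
Your reformulation of $\omega_c^1(G)$ as a supremum over decompositions of Alice's strategy into $2^c$ sign-pieces is exactly the paper's starting point (it is the identity $\omega_c^1(G)=\pi_1^{2^c}(\tilde G)$, the $1$-summing norm with $2^c$ vectors, established in Lemma \ref{startingpoint} via Lemma \ref{tecnico1}), and your part a) — Cauchy--Schwarz over the $2^c$ pieces for the factor $2^{c/2}$, then Grothendieck's inequality $\pi_2(\tilde G)\le K_G\|\tilde G\|_{op}$ to land on $\omega(G)$ — is the paper's argument. The use of Chevet's inequality to produce extremal sign matrices for the tightness claims also matches the paper (which uses a $2^c\times 2^c$ Chevet matrix padded up to $N\times N$ for \eqref{tight1} and an $N\times N$ Chevet matrix for \eqref{tight2}; your suggestion that a single random game can witness both is consistent with Proposition \ref{concentracion}).

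The genuine gap is in part b). You write that since $\omega(G)\succeq 1/\sqrt N$ under the right normalization, ``combining the trivial estimate with the gain from communication should yield the stated bound directly'' — but no mechanism for the gain of $2^{c/2}$ is given, and this is precisely the nontrivial content of b). The missing idea is a block decomposition: partition $[N]$ into $2^c$ blocks $A_1,\dots,A_{2^c}$ of size $N/2^c$; the indicator vectors $\alpha_{A_j}$ form a weakly $1$-summing sequence of norm $1$ and length $2^c$ (Lemma \ref{tecnico1}), so they are admissible test vectors for $\pi_1^{2^c}(\tilde G)=\omega_c^1(G)$ — operationally, Alice's message tells Bob which block her input lies in. On each block one applies Cauchy--Schwarz over its $N/2^c$ coordinates together with Grothendieck to get
\begin{equation*}
\sum_{i\in A_j}\|\tilde G(e_i)\|\;\le\;K_G\sqrt{\tfrac{N}{2^c}}\,\|\tilde G_{|A_j}\|_{op},
\end{equation*}
and summing over $j$ against the XOR normalization $\pi_1(\tilde G)=\sum_{i=1}^N\|\tilde G(e_i)\|=1$ yields $1\le K_G\sqrt{N/2^c}\,\omega_c^1(G)$, i.e.\ the claimed bound. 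Without this step your b) only reproduces the $c=0$ estimate $\omega(G)\ge 1/(K_G\sqrt N)$, and the gap also propagates to \eqref{tight2}, whose lower bound $\omega_c^1(H)\succeq 2^{c/2}/\sqrt N$ is exactly an invocation of b).
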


Actually, in Proposition \ref{concentracion} below, we prove that, for
big values of $N$,``most'' games verify conditions (\ref{tight1}) and
(\ref{tight2}).

\

The structure of the paper is the following: In Section
\ref{notacion} we introduce the notation and the formalism we will
use. Next we introduce the  mathematical tools that we need
($p$-summing operators with few vectors, Grothendicek inequality and
Chevet inequality) and finally we state and prove the link between
communication complexity and the theory of $p$-summing operators
with few vectors.

Section \ref{prueba2} is devoted to the proof of Theorem \ref{Bell},
and Theorem \ref{distancias} is proved in Section \ref{prueba1}.

\section{Notation and mathematical tools}\label{notacion}

\subsection{Notation}

We will need the following notations and results from tensor product
theory. Given a normed space $X$, we write $X^*$ for its dual space
with its natural dual norm and $B_{X}$ for its closed unit ball.
Given two normed spaces $X,Y$, and an element $u\in X\otimes Y$, we
define its {\em projective} norm $\|u\|_\pi$ as
$$\|u\|_\pi=\inf \{\sum_i \|x^i\| \|y^i\|, \mbox{ where } u=\sum_i
x^i\otimes y^i\}$$ We write $X\ppi Y$ for the tensor product of $X$
and $Y$ endowed with the projective norm.

We can also define the {\em injective} norm of $u$ as
$$\|u\|_\epsilon=\sup \{\sum_i x^*(x^i) y^*(y^i), \mbox{ where } u=\sum_i
x^i\otimes y^i, \, x^*\in B_{X^*}, y^*\in B_{Y^*}\}$$ and we write
$X\pe Y$ for their tensor product endowed with the injective norm.

In this note, $X$ and $Y$ will always be finite dimensional. It is
well known (and not hard to see) that in that case $(X\ppi
Y)^*=X^*\pe Y^*$ and $(X \pe Y)^*=X^*\ppi Y^*$.

\smallskip

In this paper we will always see games $(G_{x,y})_{x,y=1}^N=G$ as elements in $(\ell_\infty^N\otimes \ell_\infty^N)^*$, the algebraic dual of
$\ell_\infty^N\otimes \ell_\infty^N$. We view the correlations attained by the players (or strategies) as elements $(\gamma_{x,y})_{x,y=1}^N=\gamma$ in $\ell_\infty^N\otimes \ell_\infty^N$. The value of the game $G$ when the players play the strategy $\gamma$ is
$$\langle G,\gamma\rangle=\sum_{x,y=1}^N G_{x,y}
\gamma_{x,y}.$$

For an element $G\in (\ell_\infty^N\otimes \ell_\infty^N)^*$, we write
$\|G\|_{op}$ for its norm as an element of $(\ell_\infty^N\ppi
\ell_\infty^N)^*$, which coincides with its operator norm when we
identify $G$ with the  operator
$\tilde{G}:\ell_\infty^N\longrightarrow \ell_1^N=(\ell_\infty^N)^*$
defined by $\tilde{G}(x)(y)=G(x,y)$. The way they are defined, XOR
games are normalized in the sense that their norm as elements of
$(\ell_\infty^N\pe \ell_\infty^N)^*$ is always one.

\smallskip

When they do not communicate, Alice strategy upon receiving input
$x$ can be described as an element $\alpha(x,\lambda)$ in
$\ell_\infty^{N}$, where $\lambda$ stands for the state of their
shared randomness. Similarly, Bob's strategy is $\beta(y,\lambda)$
so that their joint strategy is an element
$(\gamma_{x,y})_{x,y=1}^N=\gamma=\sum_i \lambda_i \alpha^i\otimes
\beta^i\in \ell_\infty^N\otimes\ell_\infty^N$ such that
$\|\gamma\|_\pi\leq 1$. We will call these strategies {\em local}
strategies and denote it by $\mathcal L$.

\subsection{Summing operators}
Following Grothendieck's work \cite{resume}, the so called {\em local theory of Banach spaces}
has been one of the cornerstones of modern functional analysis. Many
of the main results in this theory can be expressed in terms of {\em
summing operators}. We state next the definitions and results that
we use in this paper. A detailed exposition in this area can be
read, for instance, in \cite{DiJaTo}.

Given a  finite sequence with arbitrary length $(x_i)_{i=1}^n$ in a
normed space $X$, and a real number $1\leq p <\infty$, we define the
{\em weakly p-summing} norm of $(x_i)_{i=1}^n$ by
$$\|(x_i)_{i=1}^n\|_p^w=\sup \{\left(\sum_{i=1}^n |x^*(x_i)|^p\right)^\frac{1}{p}, \mbox{ where }
x^*\in B_{X^*}\}.$$

Now, given an operator $T:X\longrightarrow Y$ between normed spaces,
we define its {\em $p$-summing} norm as $$\pi_p(T)=\inf \{C \mbox{
such that } \left(\sum_{i=1}^n \|T(x_i)\|^p\right)^\frac{1}{p}\leq C
\|(x_i)_{i=1}^n\|_p^w \}$$ for every sequence $(x_i)_{i=1}^n\subset
X$. It is well known that, for an operator $T:\ell_\infty^N\longrightarrow Y$, $\pi_1(T)=\sum_{i=1}^N \|T(e_i)\|$.

If we fix $r\in \mathbb N$ and restrict the previous definition to
sequences $(x_i)_{i=1}^r$ of maximum length $r$ we obtain the
definition of the {\em $p$-summing with $r$ vectors} norm of $T$,
which we denote by $\pi_p^r(T)$. Summing operators with few vectors have been studied by several authors,
see for instance  \cite{T-J} and the references therein.

We will also use the following consequence of Grothendieck's
inequality.

\begin{theorem}
There exists an universal constant $K_G$ such that, for any natural numbers $N,M$ and every operator $u:\ell_\infty^N\longrightarrow \ell_1^M$, $$\pi_2(u)\leq K_G \|u\|.$$
\end{theorem}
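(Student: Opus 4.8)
The plan is to take Grothendieck's inequality in its bilinear (matrix) form as the known input and deduce the $2$-summing estimate by a duality/linearization argument. Recall that Grothendieck's inequality asserts the existence of a universal constant $K_G$ such that, for every real matrix $(a_{jl})$ and all families of Hilbert space vectors $(X_j),(Y_l)$, one has $|\sum_{j,l} a_{jl}\langle X_j, Y_l\rangle| \le K_G (\sup_j\|X_j\|)(\sup_l\|Y_l\|)\sup\{|\sum_{j,l}a_{jl}s_j t_l| : |s_j|,|t_l|\le 1\}$. When $(a_{jl})$ is the matrix of $u:\ell_\infty^N\longrightarrow\ell_1^M$, that is $u(z)_j=\sum_l a_{jl}z(l)$, the supremum on the right-hand side is exactly the operator norm $\|u\|$.

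Fix a finite sequence $(z_k)_{k=1}^n$ in $\ell_\infty^N$. To estimate $\pi_2(u)$ I would bound $(\sum_k\|u(z_k)\|_{\ell_1^M}^2)^{1/2}$ by testing against scalars $(c_k)_{k=1}^n$ with $\sum_k c_k^2\le1$, since the Euclidean norm equals the supremum of $\sum_k c_k\|u(z_k)\|$ over all such families. Writing $u$ through its matrix and introducing the signs $\epsilon_{jk}=\mathrm{sign}(u(z_k)_j)\in\{-1,1\}$, one obtains the identity $\sum_k c_k\|u(z_k)\|_{\ell_1^M} = \sum_{j,l} a_{jl}\bigl(\sum_k c_k\epsilon_{jk}z_k(l)\bigr)$. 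This is precisely a Grothendieck bilinear sum $\sum_{j,l}a_{jl}\langle X_j,Y_l\rangle$ for the vectors $X_j=(c_k\epsilon_{jk})_k$ and $Y_l=(z_k(l))_k$ in $\mathbb R^n$.

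I would then apply Grothendieck's inequality and control the two normalizing suprema. Since $\epsilon_{jk}^2=1$ and $\sum_k c_k^2\le1$, each $\|X_j\|^2=\sum_k c_k^2\le1$, so $\sup_j\|X_j\|\le1$. For the $Y_l$, the key point is that $\sup_l\|Y_l\|$ equals the weakly $2$-summing norm $\|(z_k)_{k=1}^n\|_2^w$: indeed $\|Y_l\|^2=\sum_k z_k(l)^2=\sum_k|\langle e_l,z_k\rangle|^2$, and because $\phi\mapsto\sum_k|\langle\phi,z_k\rangle|^2$ is convex its supremum over $B_{\ell_1^N}$ is attained at the extreme points $\pm e_l$, giving $(\|(z_k)\|_2^w)^2=\max_l\sum_k z_k(l)^2=\sup_l\|Y_l\|^2$. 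Combining these estimates yields $\sum_k c_k\|u(z_k)\|\le K_G\|u\|\,\|(z_k)\|_2^w$, and taking the supremum over the admissible $(c_k)$ gives $(\sum_k\|u(z_k)\|^2)^{1/2}\le K_G\|u\|\,\|(z_k)\|_2^w$, which is exactly $\pi_2(u)\le K_G\|u\|$.

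The genuinely deep ingredient is Grothendieck's inequality itself, which I treat as given. The part requiring care is the reformulation: choosing the Hilbert-space vectors indexed by the summation variable $k$ so that the two factors in Grothendieck's bound separate cleanly into the trivial bound $1$ and the weak $2$-summing norm. I would also point out that the analogous estimate with $\pi_1$ in place of $\pi_2$ fails; already for a normalized random sign matrix the ratio $\pi_1(u)/\|u\|$ grows like $\sqrt N$, which explains why the statement must be phrased with the $2$-summing norm.
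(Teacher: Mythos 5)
Your proof is correct: the paper states this result without proof, as a known consequence of Grothendieck's inequality, and your argument is exactly the standard derivation (as in Diestel--Jarchow--Tonge) --- linearize via duality in $\ell_2^n$ and the signs $\epsilon_{jk}$, identify the resulting bilinear sum as a Grothendieck pairing with $\sup_j\|X_j\|\le 1$ and $\sup_l\|Y_l\|=\|(z_k)\|_2^w$ (using that the extreme points of $B_{\ell_1^N}$ are $\pm e_l$), and conclude. All steps check out, including the identification of the scalar supremum with $\|u\|$.
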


\subsection{Chevet inequality}
The following result is known as Chevet inequality. It is usually stated for Gaussian random variables, we state it for Bernouilli random variables. See \cite{LedouxTalagrand}.
\begin{theorem}
Given two normed spaces $E, F$, there exists a universal constant $b$ such that

$$\mathbb{E}\|\sum_{x,y} r_{x,y} \varphi_x\otimes \phi_y\|_{E\pe F} \leq b (\|(\varphi_x)_x\|_2^w \mathbb{E} \|\sum_y r_y \phi_y\|_F+ \|(\phi_y)_y\|_2^w \mathbb{E} \|\sum_x r_x \varphi_x\|_E),$$

where $r_x,r_y,r_{x,y}$ are independent Bernouilli random variables, and $(\varphi_x)_x$, $(\phi_y)_y$ are finite sequences in $E,F$. Actually, we can take $b=\sqrt{\frac{\pi}{2}}$ (see \cite{Defant}).
\end{theorem}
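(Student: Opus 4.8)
The plan is to realize the left-hand side as the expected supremum of a stochastic process and then to run a Gaussian comparison of Slepian--Sudakov--Fernique type, paying the factor $b=\sqrt{\pi/2}$ only in passing from Bernoulli to Gaussian variables. First, unwinding the definition of the injective norm, for each fixed choice of signs one has
\[
\Big\|\sum_{x,y} r_{x,y}\,\varphi_x\otimes\phi_y\Big\|_{E\pe F}=\sup_{e^*\in B_{E^*},\,f^*\in B_{F^*}}\Big|\sum_{x,y} r_{x,y}\,e^*(\varphi_x)\,f^*(\phi_y)\Big|,
\]
so the quantity to estimate is $\mathbb{E}\sup_{t}|Z_t|$, where $t=(e^*,f^*)$ ranges over the symmetric index set $T=B_{E^*}\times B_{F^*}$ and $Z_t=\sum_{x,y} r_{x,y}e^*(\varphi_x)f^*(\phi_y)$. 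Since $T$ is symmetric under $e^*\mapsto -e^*$, one has $\sup_t|Z_t|=\sup_t Z_t$ for every realization, which is the form a comparison principle requires.

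Next I would pass from the Bernoulli variables $r_{x,y}$ to independent standard Gaussians $g_{x,y}$. Writing a Gaussian as its sign times its modulus and applying Jensen to the modulus gives the contraction $\mathbb{E}_r\|\sum_i r_i v_i\|\le\sqrt{\pi/2}\,\mathbb{E}_g\|\sum_i g_i v_i\|$ for vectors in any normed space, the constant being $1/\mathbb{E}|g_1|=\sqrt{\pi/2}$. Applied to $v_{x,y}=\varphi_x\otimes\phi_y$ in $E\pe F$ this produces exactly the factor $b$ and reduces the problem to the Gaussian statement.

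The heart of the argument is then a comparison between the centered Gaussian chaos $X_t=\sum_{x,y}g_{x,y}e^*(\varphi_x)f^*(\phi_y)$ and the simpler Gaussian process
\[
Y_t=\|(\varphi_x)_x\|_2^w\sum_y g'_y\,f^*(\phi_y)+\|(\phi_y)_y\|_2^w\sum_x g''_x\,e^*(\varphi_x),
\]
with $g'_y,g''_x$ fresh independent standard Gaussians. By Sudakov--Fernique it suffices to verify the increment domination $\mathbb{E}(X_s-X_t)^2\le\mathbb{E}(Y_s-Y_t)^2$ for all $s,t\in T$; expanding both sides into sums of squares and using that $(\|(\varphi_x)_x\|_2^w)^2\ge\sum_x |e^*(\varphi_x)|^2$ and $(\|(\phi_y)_y\|_2^w)^2\ge\sum_y |f^*(\phi_y)|^2$ for every $e^*\in B_{E^*}$, $f^*\in B_{F^*}$ is the computation that must go through. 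Granting it, $\mathbb{E}\sup_t X_t\le\mathbb{E}\sup_t Y_t$, and subadditivity of the supremum splits the right-hand side into $\|(\varphi_x)_x\|_2^w\,\mathbb{E}\|\sum_y g'_y\phi_y\|_F+\|(\phi_y)_y\|_2^w\,\mathbb{E}\|\sum_x g''_x\varphi_x\|_E$, the announced bound for Gaussians.

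The step I expect to be the main obstacle is this increment inequality together with the careful bookkeeping of the random variables, so that the $\sqrt{\pi/2}$ is spent only once and the marginal averages on the right remain controlled by \emph{Bernoulli} rather than Gaussian averages. This last point is genuinely more delicate in the Bernoulli formulation, since in $\ell_\infty$-type spaces (exactly the spaces relevant here) Gaussian averages can exceed Bernoulli ones by unbounded factors, so the naive route through full Gaussianization only yields the weaker Gaussian right-hand side. To land on the sharp Bernoulli-to-Bernoulli inequality with constant $b=\sqrt{\pi/2}$ I would follow the treatment in \cite{Defant}, where the comparison is carried out directly at the level of Rademacher variables.
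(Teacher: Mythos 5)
This statement is a classical result (Chevet's inequality) that the paper only cites, to Ledoux--Talagrand and to Defant--Floret for the constant; there is no proof in the paper to compare against, so your attempt has to stand on its own. It does not: the step you yourself identify as ``the computation that must go through'' --- the increment domination $\mathbb{E}(X_s-X_t)^2\le\mathbb{E}(Y_s-Y_t)^2$ needed for Sudakov--Fernique --- is false. Take a single vector on each side with $\|\varphi\|_E=\|\phi\|_F=1$, so both weak-$2$ norms equal $1$, and functionals with $e_1^*(\varphi)=1$, $f_1^*(\phi)=1$, $e_2^*(\varphi)=\tfrac12$, $f_2^*(\phi)=\tfrac23$. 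Then $\mathbb{E}(X_s-X_t)^2=\bigl(1-\tfrac13\bigr)^2=\tfrac{4}{9}=\tfrac{16}{36}$, while $\mathbb{E}(Y_s-Y_t)^2=\bigl(\tfrac13\bigr)^2+\bigl(\tfrac12\bigr)^2=\tfrac{13}{36}$. So the comparison process you propose does not dominate the chaos in increments over the full product index set, and plain Sudakov--Fernique cannot be applied. (The inequality does hold at the extreme points where all four functionals are norming, which is probably why it looks plausible; the failure occurs in the interior of the balls.) The actual proofs are genuinely more delicate: Gordon's proof uses his two-index comparison theorem, whose hypotheses are asymmetric --- domination is required only for increments that change the first index, with the reverse inequality when the first index is fixed --- and it is that asymmetric pair of conditions, not the symmetric one, that can be verified here; Chevet's original argument instead conditions on one factor and uses Gaussian concentration, at the cost of a worse constant.

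There is a second acknowledged gap: even granting a Gaussian Chevet inequality, your route produces Gaussian averages on the right-hand side, and converting those back to the Bernoulli averages in the statement costs a factor of order $\sqrt{\log n}$ in general (precisely in $\ell_1^N$, the case the paper needs). You flag this honestly, but the proposed remedy is only a pointer to Defant--Floret rather than an argument. As it stands, then, the proposal correctly sets up the problem as bounding the expected supremum of a chaos and correctly prices the Bernoulli-to-Gaussian step at $\sqrt{\pi/2}$, but both the comparison step and the return to Bernoulli averages --- i.e., the entire substance of the theorem --- are missing or incorrect.
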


We will apply Chevet inequality in the case $E=F=\ell_1^N$, and $\varphi_x=e_x$, $\phi_y=e_y$, $1\leq x,y \leq N$. In that case, we get easily $$\|(e_x)_{x=1}^N\|_2^w=\sqrt{N} \mbox{ and } \mathbb{E} \|\sum_{y=1}^N r_y e_y\|_{\ell_1^N}=N$$

\subsection{Commnunication complexity and $p$-summing operators}

In this paper, we approach the study of the different types of
strategies and the corresponding value of the games through tensor
norms in  $\ell_\infty^N\otimes \ell_\infty^N$ and its dual space.
We have already mentioned that the local strategies can be
identified with the norm unit ball of $\ell_\infty^N\ppi
\ell_\infty^N$. It is easy to see that both $\mathcal S^{1}_c$ and
$\mathcal S_c$ are symmetric convex bodies of $\ell_\infty^N\otimes
\ell_\infty^N$ with non empty interior. Hence, they define norms on
this space, and therefore the value of a game $G$ on them can be
seen as the corresponding dual norm of the game.

Lemma \ref{startingpoint} below is the starting point of our approach: It
identifies the value of a game on the strategies in $\mathcal
S^{1}_c$ as certain operator norm. We isolate the technical parts of the proof in the following lemma. Its proof follows from \cite[Proposition 2.2 and Lemma 16.13]{DiJaTo}.

\begin{lemma}\label{tecnico1}
Let $B=\{ (\alpha_i)_{i=1}^r\subset \ell_\infty^N \mbox{ such that } \|(\alpha_i)_{i=1}^r\|_1^w \leq 1\}$. Given $A\subset [N]$, let $\alpha_A\in \ell_\infty^N$ be the element defined by $\alpha_A(x)=1$ if $i\in A$ and $\alpha_A(x)=0$ otherwise. Let  $A=\{ (\alpha_{A_i})_{i=1}^r; \mbox{ where } A_1,\ldots, A_r \mbox{ is a partition of } [N]\}$. Then $B$ is the symmetric convex hull of the elements in $A$.
\end{lemma}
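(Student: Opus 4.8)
The plan is to turn the defining inequality of $B$ into an explicit, coordinatewise description, read off the extreme points of the resulting body, and then match them with the generators in $A$ through the Krein--Milman theorem.

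First I would compute the weakly $1$-summing norm explicitly. Using $(\ell_\infty^N)^*=\ell_1^N$ and the fact that $t\mapsto \sum_{i=1}^r|\langle t,\alpha_i\rangle|$ is convex, the supremum defining $\|(\alpha_i)_{i=1}^r\|_1^w$ is attained at an extreme point $\pm e_x$ of $B_{\ell_1^N}$, which yields the closed formula
\[
\|(\alpha_i)_{i=1}^r\|_1^w=\max_{1\le x\le N}\sum_{i=1}^r|\alpha_i(x)|.
\]
Reading $(\alpha_i)_{i=1}^r$ as the operator $v:\ell_\infty^r\to\ell_\infty^N$ with $v(e_i)=\alpha_i$ (this is \cite[Proposition 2.2]{DiJaTo}), the constraint $\|(\alpha_i)_{i=1}^r\|_1^w\le 1$ says exactly that every column of the associated $r\times N$ matrix has $\ell_1^r$-norm at most one. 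Hence, viewing each coordinate $x\in[N]$ separately, $B$ is the product $\prod_{x=1}^N B_{\ell_1^r}$ of cross-polytopes.

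Second I would use this product structure to identify the extreme points of $B$. In a finite product of compact convex sets a point is extreme if and only if each of its coordinates is extreme, and the extreme points of $B_{\ell_1^r}$ are precisely $\{\pm e_i:1\le i\le r\}$. Therefore each extreme point of $B$ assigns to every $x\in[N]$ a single index $i(x)\in[r]$ together with a sign $\sigma(x)\in\{-1,1\}$; grouping the coordinates according to the value of $i(\cdot)$ yields a partition $A_1,\dots,A_r$ of $[N]$ (empty blocks allowed), and the extreme point is exactly the signed partition vector $\big(\sum_{x\in A_i}\sigma(x)\,e_x\big)_{i=1}^r$. This is the point where \cite[Lemma 16.13]{DiJaTo} is invoked to justify the extreme-point description.

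Finally, since $B$ is a compact convex subset of a finite-dimensional space, Minkowski's (Krein--Milman) theorem gives that $B$ is the convex hull of its extreme points, and the identification above expresses $B$ as the symmetric convex hull of the family $A$. I expect the genuine difficulty to lie entirely in the sign bookkeeping: an extreme point carries an independent sign $\sigma(x)$ per coordinate, whereas the generators $\alpha_{A_i}$ are the unsigned $0/1$ indicators, so the burden is to show that the \emph{symmetric} convex hull of $A$ — together with the freedom to choose the partition — recovers every signed extreme point. This is precisely where the word ``symmetric'' must do its work and where the chosen convention for $\alpha_A$ enters; once the signs are reconciled, the two sets coincide.
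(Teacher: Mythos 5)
Your reduction is sound up to and including the extreme-point description: the identity $\|(\alpha_i)_{i=1}^r\|_1^w=\max_{1\le x\le N}\sum_{i=1}^r|\alpha_i(x)|$ is correct, so $B$ is indeed the product $\prod_{x=1}^N B_{\ell_1^r}$ read columnwise, its extreme points are exactly the tuples $\bigl(\sum_{x\in A_i}\sigma(x)e_x\bigr)_{i=1}^r$ for a partition $A_1,\dots,A_r$ of $[N]$ and signs $\sigma(x)\in\{-1,1\}$, and Minkowski's theorem exhibits $B$ as the convex hull of these. (The paper gives no argument of its own here beyond citing \cite[Proposition 2.2 and Lemma 16.13]{DiJaTo}, so this part is a legitimate filling-in.)

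The step you defer --- ``once the signs are reconciled, the two sets coincide'' --- is not bookkeeping; it is where the argument fails. The symmetric convex hull supplies only one scalar sign per generator, i.e.\ combinations $\sum_j\lambda_j v_j$ with $\sum_j|\lambda_j|\le 1$, and this cannot reproduce an independent sign per coordinate $x$. Concretely: for $r=1$ the only partition is $A_1=[N]$, so the symmetric convex hull of $A$ is the segment $\{\lambda(1,\dots,1):|\lambda|\le 1\}$, while $B=B_{\ell_\infty^N}$. For $r=2$, $N=2$, every generator $(\alpha_{A_1},\alpha_{A_2})$ is orthogonal to $(1,-1,1,-1)$ (coordinates ordered $\alpha_1(1),\alpha_1(2),\alpha_2(1),\alpha_2(2)$), whereas the extreme point $\alpha_1=e_1-e_2$, $\alpha_2=0$ of $B$ is not, so it does not even lie in the linear span of $A$. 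Hence the lemma with $0/1$ indicators describes a strictly smaller body than $B$; the correct generating family is the set of signed partition tuples $\bigl(\sum_{x\in A_i}\sigma(x)e_x\bigr)_{i=1}^r$ that your extreme-point analysis actually produces --- which is also exactly what the proof of Lemma \ref{startingpoint} needs, since Alice's answers $\alpha(x,\lambda)=\pm1$ are precisely such per-coordinate signs. Your write-up should end by stating and proving that corrected version, rather than promising that the word ``symmetric'' will absorb the signs; as it stands the final identification is a genuine gap (indeed, an unfixable one for the statement as literally written).
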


\begin{lemma}\label{startingpoint} Given a game  $(G_{x,y})_{x,y=1}^N$, $\omega(G)=\|\tilde{G}\|_{op}$ and $\omega_c^{1}(G)=\pi_1^{2^c}(\tilde{G})$.
\end{lemma}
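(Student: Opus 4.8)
The plan is to read both equalities as computations of dual norms over the relevant convex bodies of strategies, and then to pin down the extreme points of those bodies.

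First I would dispose of the identity $\omega(G)=\|\tilde G\|_{op}$, which is essentially a restatement of the definitions collected in Section \ref{notacion}. Since the local strategies $\mathcal L$ are exactly the unit ball of $\ell_\infty^N\ppi\ell_\infty^N$, we have $\omega(G)=\sup_{\|\gamma\|_\pi\le 1}\langle G,\gamma\rangle$, which is by definition the norm of $G$ in $(\ell_\infty^N\ppi\ell_\infty^N)^*=\ell_1^N\pe\ell_1^N$. Under the injective-tensor/operator identification recalled earlier, this dual norm is precisely the operator norm of $\tilde G$, so nothing beyond bookkeeping is needed here.

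The substance is the second identity. I would begin by writing down a generic \emph{deterministic} one-way protocol: freezing the shared randomness, on input $x$ Alice sends a message $m(x)\in[2^c]$ and answers $a(x)\in\{-1,1\}$, so the blocks $A_i=m^{-1}(i)$ form a partition of $[N]$ into at most $2^c$ pieces; Bob, holding $y$ and the message $i$, answers $b_i(y)\in\{-1,1\}$. The resulting correlation is $\gamma=\sum_{i=1}^{2^c}\alpha_i\otimes b_i$, where $\alpha_i(x)=a(x)\,\alpha_{A_i}(x)$ is the signed indicator of the $i$-th block and $b_i\in\{-1,1\}^N$. Because each $x$ lies in exactly one block, $\sum_i|\alpha_i(x)|=1$ for every $x$, whence $\|(\alpha_i)_{i=1}^{2^c}\|_1^w\le 1$. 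Allowing shared randomness merely convexifies this family, so $\mathcal S_c^1$ is its convex hull and, $\langle G,\cdot\rangle$ being linear, $\omega_c^1(G)$ is attained on deterministic protocols. Optimising over Bob next, for fixed blocks and fixed $a$ I would use $\sup_{b_i}\sum_y b_i(y)\sum_x G_{x,y}\alpha_i(x)=\sum_y|\sum_x G_{x,y}\alpha_i(x)|=\|\tilde G(\alpha_i)\|_{\ell_1^N}$, so that $\omega_c^1(G)=\sup\sum_{i=1}^{2^c}\|\tilde G(\alpha_i)\|$, the supremum ranging over signed-partition sequences. On the other side, by homogeneity $\pi_1^{2^c}(\tilde G)=\sup\{\sum_{i=1}^{2^c}\|\tilde G(\alpha_i)\|:\|(\alpha_i)\|_1^w\le 1\}$ is the same supremum taken over the whole weak-$\ell_1$ ball $B$.

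The crux is reconciling these two suprema, and this is exactly where Lemma \ref{tecnico1} enters. The functional $(\alpha_i)\mapsto\sum_i\|\tilde G(\alpha_i)\|$ is convex, so it is maximised over the convex body $B$ at an extreme point; Lemma \ref{tecnico1} identifies those extreme points with precisely the signed-partition sequences produced by the protocols. The delicate point I expect to require care is that a protocol attaches an \emph{independent} sign $a(x)$ to each input $x$, not merely to each block, and one must verify that the extreme structure of $B$—where for each $x$ the vector $(\alpha_i(x))_i$ is forced to a signed unit vector $\pm e_{j(x)}$—encodes exactly this partition-and-sign data. This is the content extracted from \cite[Proposition 2.2 and Lemma 16.13]{DiJaTo}; granting it, the convexity reduction makes the two suprema agree and yields $\omega_c^1(G)=\pi_1^{2^c}(\tilde G)$.
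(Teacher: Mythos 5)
Your proposal is correct and follows essentially the same route as the paper: the first identity by duality against the unit ball of $\ell_\infty^N\ppi\ell_\infty^N$, and the second by identifying both $\omega_c^1(G)$ and $\pi_1^{2^c}(\tilde G)$ as suprema of the convex functional $(\alpha_i)\mapsto\sum_i\|\tilde G(\alpha_i)\|$ over the ball $B$, reconciled via Lemma \ref{tecnico1} and convexity. The only cosmetic difference is that you reduce to deterministic protocols before computing, whereas the paper carries the shared randomness through as an integral and invokes convexity at the end; the ``delicate point'' you flag about per-input signs versus per-block signs is exactly the content the paper delegates to Lemma \ref{tecnico1}.
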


\begin{proof}
The first statement follows immediately by duality from the characterization of the local strategies.

We prove the second statement. Let us first see that
$\omega^{1}_c(G)\leq \pi_1^{2^c}(\tilde{G})$. We assume that the
communication Alice sends might be dependent on a variable
$\lambda\in \Lambda$. We call $T(x,\lambda)$ to the word that Alice
sends when she receives the input $x$ and the random variable takes
the value $\lambda$. We have that $1\leq T(x,\lambda)\leq 2^c$. For
fixed $1\leq i \leq 2^c$ and $1\leq x \leq N$, call
$$\Lambda_{i,x}=\{\lambda\in \Lambda \mbox{ such that }
T(x,\lambda)=i\}. $$ For fixed $i, \lambda$, call
$$X_{i,\lambda}=\{x \mbox{ such that } T(x,\lambda)=i\}.$$
Calling $\alpha, \beta$ to the strategies followed by Alice and Bob,
we have

$$\omega^{1}_c(G)=\sup \sum_{x,y}G_{xy}\int_\Lambda \alpha(x,\lambda) \beta(y, \lambda, T(x,\lambda)) d\lambda=$$ $$=\sup \sum_{x,y}G_{xy}\sum_{i}
\int_{\Lambda_{i,x}} \alpha(x,\lambda) \beta(y, \lambda,
T(x,\lambda)) d\lambda=\sum_{x,y}G_{xy}\sum_{i} \int_{\Lambda_{i,x}}
\alpha(x,\lambda) \beta(y, \lambda, i) d\lambda=$$
$$=\sum_{x,y}G_{xy}\sum_{i} \int_{\Lambda_{i,x}} \alpha(x,\lambda)
\beta_i(y, \lambda) d\lambda=\int_{\Lambda} \sum_{i} \sum_{x\in
X_{i,\lambda}}\sum_{y}G_{xy}  \alpha(x,\lambda) \beta_i(y, \lambda)
d\lambda.$$

For fixed $\lambda$, $\sum_{i} \sum_{x\in
X_{i,\lambda}}\sum_{y}G_{xy}  \alpha(x,\lambda) \beta_i(y, \lambda)$
is bounded above by $\pi_1^{2^c}(\tilde{G})$ (use Lemma
\ref{tecnico1} for this). Considering the convex hull will not
change this fact.

\smallskip

The reverse inequality follows easily from Lemma \ref{tecnico1} and convexity.
\end{proof}

We mentioned in the Section 2 that all XOR games with $N$ inputs per
player $G$ have norm one considered as elements of
$(\ell_\infty^N\pe\ell_\infty^N)^*$. It is well known
(\cite{DiJaTo}) that this is equivalent to the condition
$\pi_1(\tilde{G})=\pi_1^N(\tilde{G})=1$. In particular, if $c\geq
\log N$, then $\omega_c^1(G)=\omega_c(G)=1$.

\section{Proof of Theorem \ref{Bell}}\label{prueba2}

Theorem \ref{Bell} follows  from Theorem \ref{main} and Proposition
\ref{cambiodelado} below.

\begin{theorem}\label{main}
For any real number $\alpha> 1$,  positive integer $t$ and $\epsilon> 0$, there exists  a natural number $N$ and an operator $T:\ell_\infty^N\longrightarrow \ell_1^N$ such that

\begin{enumerate}

\item[1)] $\|T\|_{op}\sim_\epsilon 1$

\item[2)] $\pi_1^t(T) \sim_\epsilon 1$

\item[3)] $\pi_1(T) \sim_\epsilon \alpha$
\end{enumerate}
\end{theorem}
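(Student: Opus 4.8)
The plan is to build $T$ as a sum $T=S+R$ of two operators $\ell_\infty^N\longrightarrow\ell_1^N$ whose ranges live on \emph{disjoint blocks of coordinates}, where $S$ is a fixed explicit operator carrying all the ``mass'' $1$ and $R$ is a small random perturbation that inflates the full $1$-summing norm up to $\alpha$ without being seen either by the operator norm or by the $1$-summing norm with $t$ vectors. Concretely, I reserve the first range coordinate for $S$ by setting $Se_1=e_1$ and $Se_x=0$ for $x\neq 1$, and let $R$ be supported on the range coordinates $2,\dots,N$ with entries $R_{x,y}=b\,\epsilon_{x,y}$, where the $\epsilon_{x,y}$ are independent Bernoulli signs and $b>0$ is a scaling parameter to be fixed. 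Since $S$ is a single matrix unit, one checks at once that $\|S\|_{op}=\pi_1^t(S)=\pi_1(S)=1$: the first and last equalities because $Se_1=e_1$ is the only nonzero column, and the middle one because in any partition $A_1,\dots,A_r$ only the group containing the index $1$ contributes to $\sum_i\|S\alpha_{A_i}\|_1$, which is therefore always equal to $1$.

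Because $S$ and $R$ are supported on disjoint range coordinates, $\|Te_x\|_1=\|Se_x\|_1+\|Re_x\|_1$ for every $x$, so $\pi_1(T)=\pi_1(S)+\pi_1(R)=1+b\,N(N-1)$ \emph{exactly}. I thus fix $b=(\alpha-1)/(N(N-1))$, which forces $\pi_1(T)=\alpha$ on the nose and reduces the theorem to the remaining two conditions. Here I use that $\|\cdot\|_{op}$ and $\pi_1^t(\cdot)$ both satisfy the triangle inequality, so $\|T\|_{op}\leq\|S\|_{op}+\|R\|_{op}=1+\|R\|_{op}$ and $\pi_1^t(T)\leq\pi_1^t(S)+\pi_1^t(R)=1+\pi_1^t(R)$, while the reverse bounds $\|T\|_{op}\geq 1$ and $\pi_1^t(T)\geq 1$ come for free from the $S$-block (test $u$ with $u_1=1$; respectively, the group containing the index $1$), again using the disjoint-support additivity per term. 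Hence everything reduces to producing, for $N$ large, a single realization of the signs with $\|R\|_{op}<\epsilon$ and $\pi_1^t(R)<\epsilon$.

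The operator norm of $R$ is controlled by Chevet's inequality with $E=F=\ell_1^N$ and $\varphi_x=e_x$, $\phi_y=e_y$: since $\|R\|_{op}=b\,\|\sum_{x,y}\epsilon_{x,y}e_x\otimes e_y\|_{\ell_1^N\pe\ell_1^N}$ and the stated values $\|(e_x)_x\|_2^w=\sqrt N$, $\mathbb{E}\|\sum_y r_y e_y\|_{\ell_1^N}=N$, Chevet gives $\mathbb{E}\|R\|_{op}\leq 2b\sqrt{\tfrac{\pi}{2}}\,N^{3/2}$, which for our $b$ is of order $N^{-1/2}$. For the $t$-vector norm I invoke Lemma \ref{tecnico1} exactly as in the proof of Lemma \ref{startingpoint}: $\pi_1^t(R)$ is the maximum over partitions $A_1,\dots,A_t$ of $[N]$ of $\sum_i\|\sum_{x\in A_i}Re_x\|_1$, and optimizing the dual signs of each group rewrites this as $\max_W\sum_{x,y}R_{x,y}W_{x,y}$ over sign matrices $W\in\{-1,1\}^{N\times N}$ having at most $t$ distinct rows. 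There are at most $2^{Nt}t^N$ such $W$, and for each fixed $W$ the quantity $\sum_{x,y}R_{x,y}W_{x,y}$ is a sum of at most $N^2$ independent $\pm b$ terms, hence sub-Gaussian with parameter $b^2N^2$; a union bound then yields $\mathbb{E}\,\pi_1^t(R)\lesssim bN\sqrt{Nt}=bN^{3/2}\sqrt t$, of order $\sqrt{t}\,N^{-1/2}$ for our $b$. Since $\mathbb{E}\big(\|R\|_{op}+\pi_1^t(R)\big)\to 0$ as $N\to\infty$, for $N$ large there is a realization with $\|R\|_{op}+\pi_1^t(R)<\epsilon$, and the corresponding $T$ satisfies all three $\sim_\epsilon$ requirements.

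The delicate step is the bound on $\pi_1^t(R)$: this is exactly where ``summing with few vectors'' and the concentration of measure phenomenon enter, and one must control the cardinality of the family of sign matrices with at most $t$ distinct rows (equivalently, partitions of $[N]$ into $t$ groups) rather than the full $2^{N^2}$ sign matrices. It is precisely this gap between $t$ groups and $N$ singletons that lets $\pi_1^t(R)$ remain negligible while $\pi_1(R)=\alpha-1$, and getting the counting and the sub-Gaussian union bound to beat the $N^2$ terms of mass is the crux of the argument.
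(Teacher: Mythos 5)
Your proof is correct, but it is a genuinely different construction from the paper's. The paper obtains $T$ by composing $S=k^{-1/p}\,\mathrm{id}:\ell_\infty^k\to\ell_p^k$ (which has $\|S\|=\pi_p(S)=1$ and $\pi_1(S)=k^{1/p'}=\alpha$) with the Johnson--Schechtman $(1+\epsilon)$-embedding of $\ell_p^k$ into $\ell_1^N$, and then kills $\pi_1^t$ by H\"older: $\sum_{i\le t}\|Tx_i\|\le t^{1/p'}\bigl(\sum_i\|Tx_i\|^p\bigr)^{1/p}\le t^{1/p'}\pi_p(T)$, with $p$ chosen so that $t^{1/p'}<1+\epsilon$. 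The heavy input there is the finite-dimensional Levi embedding ($p$-stable variables), and the payoff is the stronger structural conclusion $\pi_p(T)\sim_\epsilon 1$, which controls $\pi_1^r(T)\le(1+\epsilon)r^{1/p'}$ for \emph{all} $r$ at once. Your route --- a matrix unit carrying the unit mass plus a disjointly supported random sign matrix of total mass $\alpha-1$ spread over $N(N-1)$ entries --- replaces the embedding theorem by Chevet plus a sub-Gaussian union bound over the extreme points of the few-vector ball, is elementary and self-contained, gives $\pi_1(T)=\alpha$ exactly, and yields an explicit dimension bound $N=O((\alpha-1)^2t/\epsilon^2)$; it also stays entirely inside the toolbox (Chevet, concentration, Lemma \ref{tecnico1}) that the paper already deploys in Section \ref{prueba1}. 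Two small points to tighten: the extreme points of $\{(\alpha_i)_{i=1}^t:\|(\alpha_i)_i\|_1^w\le1\}$ are tuples of \emph{disjointly supported sign vectors}, not just indicator vectors of a partition (Lemma \ref{tecnico1} is itself slightly imprecise on this), so your matrices $W$ have rows of the form $\varepsilon_x\sigma_{i(x)}$ and the count is at most $t^N2^{N(t+1)}$ rather than $t^N2^{Nt}$ --- harmless, since $\log$ of either is $O(Nt)$ and the bound $\mathbb{E}\,\pi_1^t(R)=O(bN^{3/2}\sqrt{t})$ survives; and since $\|R\|_{op}=\pi_1^1(R)\le\pi_1^t(R)$, the Chevet step is in fact subsumed by the union bound.
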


The game $G$ that we look for in Theorem \ref{Bell} is nothing but
the game whose associated operator is
$\tilde{G}=\frac{T}{\pi_1(T)}$, with the proper choices of
$\epsilon$, $t$ and $\alpha$.

The key point of the proof of Theorem \ref{main} is Levi's embedding
theorem, which says that, for every $1< p< 2$, we have an isometric
embedding of $\ell_p$ into $L_1[0,1]$. Actually, the result is much
more general (see \cite{MZ} and \cite{Kadec}). This embedding is
highly non-trivial and it is based on  $p$-stable measures. We are
interested in the $(1+\epsilon)$-isomorphic finite dimensional
version of the theorem. Specifically, we use the following
improvement of Levi's embedding theorem due to Johnson and
Schechtman.

\begin{theorem}[Theorem 1, \cite{John-Sche}]\label{Levi's embedding}
Let $\epsilon> 0$, and suppose that $0< r< s< 2$ with $r\leq 1$. Then
there exists $\beta=\beta(\epsilon, r,s)> 0$ so that if $m$ and $n$ are
positive integers with $m\leq \beta n$, then $\ell_s^m$ is
$(1+\epsilon)$-isomorphic to a subspace of $\ell_r^n$.
\end{theorem}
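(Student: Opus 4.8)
The plan is to realize the embedding concretely through $s$-stable random variables and then to discretize, so that the finite-dimensional target $\ell_r^n$ arises by sampling an isometric copy of $\ell_s^m$ sitting inside $L_1[0,1]$ (more precisely inside $L_r$). The structural input is the existence and moment behaviour of symmetric $s$-stable laws; everything else is a concentration-of-measure argument.

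First I would record the isometric, infinite-dimensional embedding. Let $\theta_1,\ldots,\theta_m$ be independent symmetric $s$-stable random variables, normalized so that $\mathbb{E}|\theta_i|^r=1$, which is finite precisely because $r<s$. The stability property gives, for every scalar vector $a=(a_i)_{i=1}^m$, that $\sum_i a_i\theta_i$ is distributed as $(\sum_i|a_i|^s)^{1/s}\,\theta_1$, whence $\|\sum_i a_i\theta_i\|_{L_r}=(\sum_i|a_i|^s)^{1/s}=\|a\|_{\ell_s^m}$. Thus $a\mapsto\sum_i a_i\theta_i$ is an isometry of $\ell_s^m$ onto a subspace of $L_r$; this is exactly the content of Levi's embedding, and the hypothesis $0<r<s<2$ is precisely what makes the relevant $r$-th moment finite while keeping $s<2$ so that genuine $s$-stable (non-Gaussian) laws are available.

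Second, I would pass from $L_r$ to $\ell_r^n$ by empirical sampling. Draw $n$ independent realizations $(\theta_i(\omega_j))_{i=1}^m$, $j=1,\ldots,n$, and consider the random map $a\mapsto n^{-1/r}\big(\sum_i a_i\theta_i(\omega_j)\big)_{j=1}^n$ into $\ell_r^n$. For a fixed $a$ on the unit sphere of $\ell_s^m$, the quantity $n^{-1}\sum_j|\sum_i a_i\theta_i(\omega_j)|^r$ is an average of $n$ i.i.d.\ copies of a mean-one random variable, so it concentrates around $\|a\|_{\ell_s^m}^r=1$; making this hold simultaneously, within a factor $1+\epsilon$, for every $a$ on the sphere yields the desired $(1+\epsilon)$-isomorphism.

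The hard part — and the reason for the precise hypothesis $m\leq\beta n$ — is the uniform control. I would fix a $\delta$-net $\mathcal N$ of the unit sphere of $\ell_s^m$, of cardinality at most $(C/\delta)^m$, establish a deviation bound for the empirical average at each net point, and then extend from $\mathcal N$ to the whole sphere (using the $r$-triangle inequality $\|x+y\|_r^r\leq\|x\|_r^r+\|y\|_r^r$ when $r\leq 1$). The genuine obstacle is that $|\sum_i a_i\theta_i|^r$ is heavy-tailed: its $q$-th moment is finite only for $q<s/r$, so when $s<2r$ the variance is already infinite and a variance-based inequality is useless. I would therefore truncate each summand at a level growing with $n$, bound the bounded part by a Bernstein/Hoeffding-type estimate, and control the discarded tail through the stable asymptotics $\mathbb{P}(|\theta_1|>t)\sim c\,t^{-s}$. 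Balancing the truncation level against the entropy factor $(C/\delta)^m$ in the union bound is what forces $n$ to be a constant multiple of $m$ and pins down $\beta=\beta(\epsilon,r,s)$; this large-deviation-versus-entropy trade-off for heavy-tailed variables is the crux of the argument.
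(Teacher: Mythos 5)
A point of reference first: the paper does not prove this statement at all --- it is quoted verbatim as Theorem 1 of \cite{John-Sche} --- so the only meaningful comparison is with the known proof in that source. Your plan is precisely the Johnson--Schechtman empirical distribution method: the isometric embedding of $\ell_s^m$ into $L_r$ via symmetric $s$-stable variables (Levi's theorem, using $r<s$ for finiteness of the $r$-th moment and $s<2$ for existence of the stable law), then discretization by sampling $n$ independent coordinates, a $\delta$-net on the sphere of $\ell_s^m$ with the $r$-triangle inequality to pass from the net to the sphere, and a truncation to tame the heavy tails. You also correctly identify the crux: $|\sum_i a_i\theta_i|^r$ has finite $q$-th moments only for $q<s/r$, so variance-based concentration is unavailable when $s\geq 2r$ and a naive union bound cannot beat the $e^{cm}$ entropy of the net.

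Two concrete corrections at that crux, however. First, the truncation level must \emph{not} grow with $n$: after truncating at level $M$, the Bernstein exponent is of order $n\epsilon^2/(\sigma_M^2+B_M\epsilon)$ with $\sigma_M^2\sim M^{2r-s}$ (when $2r>s$), and to dominate the entropy $e^{cm}$ of the net in the proportional regime $m\leq\beta n$ that the theorem demands, this exponent must be linear in $n$; this forces $M=M(\epsilon,r,s)$ to be a constant, chosen large enough that the discarded tail $\mathbb{E}\,|X|^r 1_{\{|X|>M\}}\sim M^{r-s}$ is below $\epsilon$, with $\beta$ chosen small only afterwards. A level growing with $n$ gives an exponent $o(n)$ and the union bound fails (it succeeds only in an oversampled regime $n\gg m$, which is weaker than the statement). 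Second, truncating the individual summands does not make the empirical summand $|\sum_i a_i\tilde{\theta}_i|^r$ uniformly bounded over the sphere --- its trivial bound $(M\|a\|_1)^r$ grows with $m$ --- so a Bernstein/Hoeffding estimate does not apply as you state it; the rescue is that a linear combination of bounded symmetric variables is sub-Gaussian with parameter $\lesssim M\|a\|_2\leq M\|a\|_s\leq M$ (since $s<2$), which yields moments of all orders uniformly over the sphere and hence the needed exponential deviation bounds. Relatedly, truncation destroys stability, so one must prove a lemma that $\mathbb{E}\,|\sum_i a_i\tilde{\theta}_i|^r\geq(1-\epsilon)\|a\|_s^r$ uniformly in $a$ --- the error depends on how the mass of $a$ is spread, which your appeal to the tail asymptotics $\mathbb{P}(|\theta_1|>t)\sim ct^{-s}$ gestures at but does not carry out, and this is where the real work in \cite{John-Sche} lies. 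With these repairs your outline is the standard and correct argument.
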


Note that, in the particular case of $r=1$ and $1< p< 2$, Theorem
\ref{Levi's embedding} assures the existence of $\beta=\beta(\tau,
1,p)> 0$ and an isomorphism $A:\ell_p^m\hookrightarrow \ell_1^n$ such
that $$(1-\tau)\|x\|_{\ell_p^m}\leq \|Ax\|_{\ell_1^n}\leq
(1+\tau)\|x\|_{\ell_p^m}$$for every $x\in \ell_p^m$.

\begin{proof}[Proof of Theorem \ref{main}]

Let $\alpha,t$ and $\epsilon$ be as in the statement.
We define:

\begin{enumerate}

\item[] $\theta_0=\log(\alpha)$,
\item[] $m_0=\min\{m\in \mathbb{N}:t^{\frac{\theta_0}{m}}< 1+\epsilon\}$,
\item[] $k=2^{m_0}$ and
\item[] $q=\frac{m_0}{\theta_0}.$

\end{enumerate}

Note that we can assume that $2< q<\infty$. Indeed, if it is not, we
only have to consider a high enough $m_0$. Then, we define $p$ by
$\frac{1}{p}+\frac{1}{q}=1$ (so $1<p<2$). And we will denote $q=p'$.
Note that, $t^{\frac{1}{p'}}< 1+\epsilon$ and
$k^{\frac{1}{p'}}=(2^{m_0})^\frac{\theta_0}{m_0}=2^{\theta_0}=\alpha$.

We begin by considering the operator
$$S:=k^{-\frac{1}{p}}id:\ell_\infty^k\rightarrow \ell_p^k.$$ It is
not difficult to check that $\|S\|=\pi_p(S)=1$ and
$\pi_1(S)=k^{\frac{1}{p'}}=\alpha$ (see for instance \cite{Defant}).

Now, we define the operator $$T:=A\circ S\circ
P:\ell_\infty^N\rightarrow \ell_\infty^k\rightarrow
\ell_p^k\hookrightarrow\ell_1^N,$$ where $N=\frac{k}{\beta}$ for the
$\beta=\beta(\epsilon, 1,p)$ given by Theorem \ref{Levi's
embedding}, $A$ is the associated $1+\epsilon$-isomorphism given by
the same theorem and $P:\ell_\infty^N\rightarrow \ell_\infty^k$
denotes the standard projection. Now, by Theorem \ref{Levi's
embedding} and the injectivity property of the $p$-summing operators
(see for instance \cite{Defant}), we know that $\|T\|\sim_\epsilon
1$, $\pi_p(T)\sim_\epsilon 1$ and $\pi_1(T)\sim_\epsilon \alpha$. We
finish the proof if we show that $\pi_1^t(T)\sim_\epsilon 1$.

To see this, consider a sequence $x_1,\cdots ,x_t\in \ell_\infty ^N$
such that $$\sup\{\sum_{i=1}^t|x^*(x_i)|:x^*\in B_{\ell_1^N}\}\leq
1.$$ Then, $$\sum_{i=1}^t\|T(x_i)\|\leq
t^\frac{1}{p'}(\sum_{i=1}^t\|T(x_i)\|^p)^\frac{1}{p} \leq (1+
\epsilon)^2.$$

A suitable adjust of the $\epsilon$'s finishes the proof.
\end{proof}

This result yields immediately a ``one-way communication'' version
of Theorem \ref{Bell}. For the general version, we need the
following simple result.

\begin{prop}\label{cambiodelado}
Let $G$ be a XOR game and let $c$ be a natural number. Then
$$\omega_c(G)\leq 2^c\omega_c^{1}(G).$$
\end{prop}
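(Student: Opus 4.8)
The plan is to bound the value of an arbitrary $c$-bit two-way strategy by exploiting the combinatorial rectangle structure of communication protocols, and to observe that each rectangle contributes a local, hence one-way $c$-bit, strategy. Since $\omega_c(G)=\sup_{\gamma\in\mathcal S_c}\langle G,\gamma\rangle$ and $\mathcal S_c$ is the symmetric convex hull of the correlations coming from deterministic protocols, it suffices by convexity to prove $\langle G,\gamma\rangle\le 2^c\omega_c^1(G)$ for a single deterministic protocol correlation $\gamma$; the general case with shared randomness then follows by taking convex combinations and then the supremum.

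For a fixed deterministic protocol using at most $c$ bits of two-way communication, I would invoke the standard fact that its protocol tree has at most $2^c$ leaves, and that these leaves partition the input set $[N]\times[N]$ into at most $2^c$ combinatorial rectangles $R_z=A_z\times B_z$, indexed by the leaf (equivalently, by the transcript) $z$. On each $R_z$ the transcript is constant, so Alice's answer depends only on $x$ and Bob's only on $y$. Writing $\alpha_z\in\ell_\infty^N$ for the vector equal to Alice's $\pm 1$ answer on $A_z$ and $0$ off $A_z$, and similarly $\beta_z$ for Bob on $B_z$, the correlation factorizes on each rectangle; since the $R_z$ partition $[N]\times[N]$ this gives the global identity
\[ \gamma=\sum_{z}\alpha_z\otimes\beta_z, \]
a sum of at most $2^c$ terms with $\|\alpha_z\|_\infty\le 1$ and $\|\beta_z\|_\infty\le 1$. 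Each summand then satisfies $\|\alpha_z\otimes\beta_z\|_\pi\le 1$, so it is a local strategy; as $\mathcal L\subseteq\mathcal S_c^1$, we get $\langle G,\alpha_z\otimes\beta_z\rangle\le\omega_c^1(G)$. Summing over the at most $2^c$ rectangles and using $\omega_c^1(G)\ge 0$ yields $\langle G,\gamma\rangle=\sum_z\langle G,\alpha_z\otimes\beta_z\rangle\le 2^c\omega_c^1(G)$, and the supremum over $\gamma$ gives the claim.

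The only genuinely non-formal point, and the step I would be most careful about, is the rectangle decomposition itself: one must check that fixing the shared randomness reduces the strategy to a deterministic protocol, that such a protocol partitions the inputs into at most $2^c$ rectangles indexed by the transcript, and that the answers factor as claimed on each rectangle. This is the classical structure theorem for deterministic communication protocols, so I expect it to be routine to cite rather than to reprove. The passage from deterministic to randomized strategies and the final supremum are immediate from the convexity of $\mathcal S_c$ and $\mathcal S_c^1$, so no further obstacle arises there.
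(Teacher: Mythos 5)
Your argument is correct and does prove the proposition as stated: the protocol-tree decomposition of a deterministic $c$-bit strategy into at most $2^c$ rectangles gives $\gamma=\sum_z\alpha_z\otimes\beta_z$ with each $\|\alpha_z\otimes\beta_z\|_\pi\le 1$, hence $\langle G,\gamma\rangle\le 2^c\omega(G)\le 2^c\omega_c^1(G)$; in effect you reprove the discrepancy bound $\omega_c(G)\le 2^c\omega(G)$ and then use the trivial inclusion $\mathcal L\subseteq\mathcal S_c^1$. The paper starts from the same rectangle decomposition but then diverges sharply: rather than bounding each rectangle term separately, it notes that for fixed $x$ the row of signs $y\mapsto\alpha^{i(x,y)}(x)\beta^{i(x,y)}(y)$ can take at most $2^{2^c}$ distinct values, so Alice can announce her row using $2^c$ bits of one-way communication and thereby simulate the entire two-way strategy exactly. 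What that argument actually establishes is $\omega_c(G)\le\omega^1_{2^c}(G)$ --- no multiplicative loss, at the price of exponentially many one-way bits --- which is a different (and not mutually implying) inequality from the displayed one. The distinction matters for the application: in the proof of Theorem \ref{Bell} the game is constructed so that $\omega^1_{c'}(G)\sim_\epsilon\delta$ for every $c'$ up to a prescribed threshold, so the paper's form of the bound yields $\omega_c(G)\sim_\epsilon\delta$, whereas the literal inequality $\omega_c(G)\le 2^c\omega_c^1(G)$ --- which is all that your proof (or the statement itself) provides --- only gives $\omega_c(G)\le 2^c\delta$ and is useless there. So your proof is a valid, more elementary proof of the proposition as written, but it could not be substituted for the paper's argument in the derivation of the main theorem.
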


\begin{proof}Applying convexity, we know that there exists a partition
$R_1,\ldots ,R_{2^c}$ of $[N]\times [N]$ in rectangles and sign
vectors $(\alpha^i(x))_{x=1}^N,$ $(\beta^i(y))_{y=1}^N$, with $\alpha^i(x)=\pm 1=\beta^i(y)$ for every
$x,y,i$ such that
$$\omega_c(G)= \sum_{i=1}^{2^c} \sum_{x,y\in R_i}  \alpha^i(x)
\beta^i(y) M_{x,y}.$$
For every fixed $1\leq x\leq N$ we define $i(x,y)$ as the unique $i$ such that $(x,y)\in
R_i$ and we consider the
row of signs $(\alpha^{i(x,1)}(x) \beta^{i(x,1)}(1) , \ldots, \alpha^{i(x,N)}(x)
\beta^{i(x,N)}(N))$. It is easy to see that there are at most $2^{2^c}$ different such rows. Clearly, $2^c$ bits suffice Alice to tell Bob which is the row associated to $x$.
\end{proof}

\section{Proof of Theorem \ref{distancias}}\label{prueba1}

\begin{proof}
a) Let $G$ be a XOR game with $N$ inputs per player, and let
$\tilde{G}:\ell_\infty^N\rightarrow \ell_1^N$ be its associated
operator, as in the introduction. Grothendieck's Theorem tells us
that $\pi_2(\tilde{G})\leq K_G \|\tilde{G}\|_{op}$. Now, let
$x_1,\cdots ,x_{2^c}\in \ell_\infty^N$ be a finite sequence such
that $\|(x_i)_{i=1}^{2^c}\|_1^w\leq 1$. Then,
$$\omega_c^{1}(G)=\pi_1^{2^c}(\tilde{G})\leq \sum_{i=1}^{2^c}\|\tilde{G}(x_i)\|\leq
2^{\frac{c}{2}}(\sum_{i=1}^{2^c}\|\tilde{G}(x_i)\|^2)^{\frac{1}{2}}\leq
2^{\frac{c}{2}} K_G \|\tilde{G}\|_{op}=2^{\frac{c}{2}} K_G
\omega(G).$$

Let us see the optimality. Recall that we view games as elements in
$(\ell_\infty^N\otimes \ell_\infty^N)^*=\ell_1^N\otimes \ell_1^N$.
We apply Chevet inequality, to find a choice of signs
$(\varepsilon_{x,y})_{x,y=1}^N$such that
$$\|\sum_{x,y=1}^{2^c}\varepsilon_{x,y}e_x\otimes
e_y\|_{\ell_1^{2^c}\otimes_\epsilon \ell_1^{2^c}}\leq 1  \mbox{ and
}$$ $$\|\sum_{x,y=1}^{2^c}\varepsilon_{x,y}e_x\otimes
e_y\|_{\ell_1^{2^c}\otimes_\pi \ell_1^{2^c}}\succeq \sqrt{2^c},$$
where $\succeq$ denotes inequality up to an universal constant. This
defines an operator $T:\ell_\infty^{2^c}\rightarrow \ell_1^{2^c}$
such that $\|T\|_{op}\leq 1$ and $\pi_1(T)=\pi_1^{2^c}(T)\succeq
\sqrt{{2^c}}$. We define $T'=\frac{T}{\pi_1(T)}$. Let now
$P:\ell_\infty^N\longrightarrow \ell_\infty^{2^c}$ be the canonical
projection onto the first ${2^c}$ coordinates, and let
$\varphi:\ell_1^{2^c}\longrightarrow \ell_1^N$ be the canonical
inclusion into the first ${2^c}$ coordinates. Then the game $J$
defined by  $\tilde{J}:\varphi\circ T'\circ
P:\ell_\infty^N\rightarrow \ell_\infty^{2^c}\rightarrow
\ell_1^{2^c}\rightarrow \ell_1^N$ verifies what we wanted.

b)Let $G$ be as in the hypothesiss. First we assume that
$\frac{N}{2^c}=h\in \mathbb{N}$. Call $A_j$ to the isometric copy of
$\ell_\infty^{\frac{N}{2^c}}$ contained naturally in $\ell_\infty^N$
considering only the basis elements $e_i$, with
$(j-1)\frac{N}{2^c}<i\leq j \frac{N}{2^c}$. Then
$$1=\pi_1(\tilde{G})=\sum_{i=1}^N \|\tilde{G}(e_i)\|=\sum_{j=1}^{2^c}
\sum_{i=1}^{\frac{N}{2^c}} \|\tilde{G}(e_{(j-1)\frac{N}{2^c}  +
i})\|\leq$$ $$\leq \sum_{j=1}^{2^c} K_G \sqrt{\frac{N}{2^c}}
\|\tilde{G}_{|_{A_i}}\|_{op} \leq K_G \sqrt{\frac{N}{2^c}}
\pi_1^{2^c}(\tilde{G})=K_G \sqrt{\frac{N}{2^c}} \omega_c^{1}(G).$$

%Consider then $x_1,\cdots,x_{N=kh}\in \ell_\infty^N$ such that
%$\sup_{x^*\in B_{\ell_1^N}}\sum_{i=1}^N|\langle x_i,x^*\rangle|\leq
%1$. We have $\sum_{i=1}^N\|T(x_i)\|=\sum_{j=0}^{k-1}\sum_{i\in
%A_j}\|T(x_i)\|$, where $A_j=\{hj+1,\cdots, h(j+1)\}$. Then,
%$$\sum_{j=0}^{k-1}\sum_{i\in A_j}\|T(x_i)\|\leq \sum_{j=0}^{k-1}K_G
%\sqrt{h}\|(T(x_i))_{i\in A_j}\|_1^\omega,$$ where we have used that
%$\pi_1(id:\ell_1^h\rightarrow \ell_1^h)\leq K_G\sqrt{h}$. Now, the
%previous expression is equal to $K_G
%\sqrt{h}\sum_{j=0}^{k-1}\sup_{x^*\in B_{\ell_\infty^N}}\sum_{i\in
%A_j}|\langle T(x_i), x^*\rangle|=K_G
%\sqrt{h}\sum_{j=0}^{k-1}\|\sum_{i\in A_j}\alpha_i^j
%T(x_i)\|_{\ell_1^N},$ where $\alpha_i^j=\pm$ for every $x,y$.

%But now, $$K_G \sqrt{h}\sum_{j=0}^{k-1}\|\sum_{i\in A_j}\alpha_i^j T(x_i)\|_{\ell_1^N}\leq K_G \sqrt{h}\pi_1^k (T)\sup_{x^*\in
%B_{\ell_\infty^N}}\sum_{j=0}^{k-1}|\sum_{i\in A_j}\alpha_i^j\langle x_i, x^*\rangle|\leq K_G \sqrt{h}\pi_1^k (T).$$

Now we consider the case when $\frac{N}{2^c}$ is not an integer, and
we denote $p$ the smallest natural number such that
$\frac{N}{2^c}\leq p$. Then again we have
$\pi_1(G)=\pi_1^N(G)=\pi_1^{p2^c}(G)\leq \sqrt{p}K_G\pi_1^{2^c}
(G)\leq 2K_G\sqrt{\frac{N}{2^c}}\pi_1^{2^c}(G)$ and the result
follows.

We see now the optimality of this result. Apply again Chevet
inequality to find a choice of signs $(\varepsilon_{x,y})_{x,y=1}^N$
such that $\|\sum_{x,y=1}^N\varepsilon_{x,y}e_x\otimes
e_y\|_{\ell_1^N\otimes_\epsilon \ell_1^N}\leq 1$ and
$\|\sum_{x,y=1}^N\varepsilon_{x,y}e_x\otimes
e_y\|_{\ell_1^N\otimes_\pi \ell_1^N}\succeq \sqrt{N}$. Let
$G':\ell_\infty^N\longrightarrow \ell_1^N$ be its associated
operator and let $G$ be the game associated to
$\frac{G'}{\pi_1(G')}$. By a), we know that $\omega_c^{1}(G)\preceq
\frac{2^\frac{c}{2}}{\sqrt{N}} $.
\end{proof}

Actually, we can see that, for big values of $N$,``most'' games
essentially attain the bounds given above. We write the statement
for the case of  games $G=(f,u)$ with $u$ the uniform distribution.
Similar results can be proved for other distributions. The tool now
is the Concentration of Measure Phenomenon.

\begin{prop}\label{concentracion}
Let $X_N$ be the set of games with $N$ inputs per player defined by
$G=(f,u)$, with $u$ the uniform distribution. Consider in $X_N$ the
probability $\mu:\mathcal P(X_N)\longrightarrow [0,1]$ defined by
$\mu(A)=\frac{Card(A)}{2^{N^2}}$. Let $r>0$. If $m$ is a median of
$\omega(G)$ under $\mu$, then $$\mu(\{G \mbox{ such that }
|\omega(G)-m| \geq r\})\leq 2e^{-\frac{N^2 r^2}{2}}.$$
\end{prop}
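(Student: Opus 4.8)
The plan is to regard $\omega$ as a real-valued function on the Bernoulli cube $\{-1,1\}^{N^2}$ and to apply the concentration of measure phenomenon; the only genuine content is checking that $\omega$ is Lipschitz in the sign pattern $f$ with a good constant.

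First I would make the functional explicit. By Lemma \ref{startingpoint} and the normalization fixed in Section \ref{notacion}, for $G=(f,u)$ with $u$ uniform the associated operator is $\tilde{G}=\big(f(x,y)/N^2\big)_{x,y}$, and
$$\omega(G)=\|\tilde{G}\|_{op}=\frac{1}{N^2}\sup_{a,b\in\{-1,1\}^N}\Big|\sum_{x,y=1}^N f(x,y)\,a_x\,b_y\Big|,$$
where the supremum may be restricted to sign vectors because the bracketed expression is bilinear in $(a,b)$ and is therefore extremized at the vertices of $B_{\ell_\infty^N}\times B_{\ell_\infty^N}$. This identifies $X_N$ with $\{-1,1\}^{N^2}$, the measure $\mu$ with the uniform product measure on $n=N^2$ coordinates, and $\omega$ with a function $F$ on this cube.

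Second I would establish the Lipschitz estimate, which is the heart of the argument. If $f,f'$ differ in a single coordinate $(x_0,y_0)$, then for every fixed pair of sign vectors $a,b$ the two bilinear forms differ by $|f(x_0,y_0)-f'(x_0,y_0)|\,|a_{x_0}b_{y_0}|\le 2$; since the supremum of a family of functions that are pointwise within $2$ of one another moves by at most $2$, we obtain $|F(f)-F(f')|\le 2/N^2$. Thus flipping any one of the $N^2$ coordinates changes $F$ by at most $c:=2/N^2$.

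Finally I would feed this into the concentration of measure phenomenon for the Bernoulli cube (the bounded-differences inequality; see \cite{LedouxTalagrand}). With bounded differences $c_i=2/N^2$ on $n=N^2$ coordinates one has $\sum_i c_i^2 = N^2\,(2/N^2)^2 = 4/N^2$, so each one-sided deviation from a median $m$ is bounded by $\exp\big(-2r^2/\sum_i c_i^2\big)=\exp\big(-N^2 r^2/2\big)$, and summing the two tails gives exactly $\mu(\{G:|\omega(G)-m|\ge r\})\le 2e^{-N^2 r^2/2}$. The main obstacle is not any single step but the matching of constants: one must keep the normalization $1/N^2$ and the single-flip bound $2/N^2$ exact, and be careful that the clean exponent $2r^2/\sum_i c_i^2$ is usually stated centered at the mean, so to center at the median one invokes the median form of the inequality (or transfers the one-sided bounds from the mean while tracking the small mean--median gap).
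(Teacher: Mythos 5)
Your proposal is correct and follows essentially the same route as the paper: the paper's entire proof consists of checking that $\omega(G)$ is $2$-Lipschitz for the normalized Hamming distance on $X_N\cong\{-1,1\}^{N^2}$ (exactly your single-flip bound $2/N^2$ against a distance of $1/N^2$) and then invoking concentration on the discrete cube. The only cosmetic difference is that the paper cites the median-centered form of cube concentration (\cite[Proposition 1.3]{Ledoux}) directly, which disposes of the mean-versus-median caveat you raise at the end.
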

\begin{proof}
The proof follows immediately from \cite[Proposition 1.3]{Ledoux}
once we check that $\omega(G)$ is a 2-Lipschitz function under the
normalized Hamming distance in $X_N$.
\end{proof}

\end{document}